\def\F{\Bbb F}
\def\P{\Bbb P}
\def\N{\Bbb N}
\def\Z{\Bbb Z}
\def\({\left(}
\def\){\right)}
\def\supp{\text{supp}}
\newcommand{\rmv}[1]{}
\begin{document}
\title{Codes with locality from cyclic extensions of Deligne-Lusztig curves}
%
%
\author{Gretchen L. Matthews \inst{1} \\
and Fernando Pi\~{n}ero
\inst{2}
 }
\authorrunning{G. L. Matthews and F. Pi\~{n}ero}
%
\institute{Department of Mathematics, Virginia Tech, Blacksburg, VA 24061 \\\email{gmatthews@vt.edu}
\url{\empty }\thanks{Partially supported by NSF DMS-1855136.}\\
Department of Mathematics, University of Puerto Rico at Ponce, Ponce, PR \email{fernando.pinero1@upr.edu}\\
\url{\empty}}

\maketitle              
\begin{abstract}
Recently, Skabelund defined new maximal curves which are cyclic extensions of the Suzuki and Ree curves. Previously, the now well-known GK curves were found as cyclic extensions of the Hermitian curve. In this paper, we consider locally recoverable codes constructed from these new curves, complementing that done for the GK curve. Locally recoverable codes allow for the recovery of a single symbol by accessing only a few others which form what is known as a recovery set. If every symbol has at least two disjoint recovery sets, the code is said to have availability. Three constructions are described, as each best fits a particular situation. The first employs the original construction of locally recoverable codes from curves by Tamo and Barg. The second yields codes with availability by appealing to the use of fiber products as described by Haymaker, Malmskog, and Matthews, while the third accomplishes availability by taking products of codes themselves. We see that cyclic extensions of the Deligne-Lusztig curves provide codes with smaller locality than those typically found in the literature. 

\end{abstract}

%
%
%
\section{Introduction}

Maximal curves have played a role in a number of applications in coding theory. For instance, they allow for the construction of long algebraic geometry codes and yield explicit families of codes with parameters exceeding the Gilbert-Varshamov bound \cite{TVZ}. The Deligne-Lusztig curves, which include the Hermitian, Suzuki, and Ree curves, have proven particularly useful. In particular, Hermitian codes are perhaps the best understood algebraic geometry codes other than Reed-Solomon codes. The Suzuki and Ree curves share several important properties with the Hermitian family in that they are optimal with respect to the Hasse-Weil bound and have known automorphism groups; thus, codes from these curves have interesting properties as well. 

More recently, maximal curves have been employed in the construction of codes with locality. In some applications, it is desirable to recover a single (or small number of) codeword symbol(s) by
accessing only a few, say $r$, particular symbols of the received word. This leads to the notion of locally
recoverable codes, or LRCs. Tamo and Barg 
\cite{TB} introduced a construction for codes with
locality that is similar to that of algebraic geometry codes. This motivated much work on locally recoverable codes, including \cite{IPAM}, \cite{BBV}, \cite{Guru}, \cite{Jin}, \cite{LMX}. In \cite{HMM}, we employ maximal curves to construct LRCs with availability $t \geq 2$, meaning each coordinate $j$ has $t$ disjoint recovery sets. Codes with availability make information more available to more users, since recovery of an erasure is not entirely dependent on a single set of coordinates (which may itself contain erasures). 

In this paper, we define codes with locality from new maximal curves constructed by Skabelund \cite{Ska} using cyclic covers of the Suzuki and Ree curves. The Suzuki curve $S_q$ over $\F_q$ gets its name from its automorphism group which is the Suzuki group $Sz(q)$ of order $q^2(q^2 + 1)(q-1)$. In \cite{HS}, Hansen and Stichtenoth considered this curve and applications to algebraic geometry codes leading to other works such as \cite{KP}, \cite{Matthews_Suz}. Recently, Eid, Hammond, Ksir, and Peachey \cite{Eid} constructed an algebraic geometry (AG) code over $\F_{q^4}$ whose automorphism group is $Sz(q)$. Skabelund considers a cyclic extension of $S_q$ and proves it is maximal over $\F_q$ and $\F_{q^4}$. Similarly, the Ree curve $R_q$ over $\F_q$ has a Ree group as its automorphism group. \rmv{In \cite{MTZ}  Montanucci, Timpanella, and Zini construct AG codes over $\F_{q^6}$ using $R_q$ as well as  AG codes from $S_q$.} Both curve constructions are similar to that of the 
Giulietti-Korchm\'{a}ros, or GK, curve, which has already proven useful in constructing codes with locality. These cyclic extensions of the Suzuki and Ree curves have also been utilized for AG codes and for quantum codes from them \cite{MTZ} and their automorphism groups have been determined by Giulietti, Montanucci, Quoos, and Zini \cite{GMQZ}.

This paper is organized as follows. In Section \ref{LRC}, we obtain codes with locality from the cyclic extension $\tilde{S}_q$ of the Suzuki curve $S_q$ and the cyclic extension $\tilde{R}_q$ of the Ree curve $R_q$. The locality is much smaller relative to the alphabet size and code length than comparable constructions. In Section \ref{avail}, we construct codes with availability from $\tilde{S}_q$ and $\tilde{R}_q$. Our constructions build on tools found in \cite{TB} and \cite{HMM}, and some useful background may be found there. Because explicit code descriptions remain out of reach for these standard constructions when employing $R_q$ or $\tilde{R}_q$ (as they depend on explicit bases for Riemann-Roch spaces which remain elusive), we provide an alternate construction for such settings. We also consider constructions from products of codes. In Section \ref{examples}, we consider examples of the above constructions and make some comparisons between them. 

\section{Locally recoverable codes} \label{LRC}
Locally recoverable codes, or LRCs for short, can recover a single (or small number of) codeword symbol(s) by accessing a small number, say $r$, of particular symbols of the received word. In principle, the locality $r$ should be small so as to limit network traffic though this can adversely impact other
code parameters. While an $[n,k,d]$ code $C$, meaning a code of length $n$, dimension $k$, and minimum distance $d$, can recover any $d-1$ erasures or correct any
$\left\lfloor \frac{d-1}{2} \right\rfloor$ errors, this assumes access to all other symbols of the entire received word.  More precisely, the code
$C$ of length $n$ over the alphabet $\F$ (typically taken to be a finite field) is locally recoverable with locality $r$ if and only if for all
$j \in [n]:= \left\{ 1, \dots, n \right\}$ there exists
$$A_j \subseteq [n] \setminus \{ j \} \mathrm{\ with \ } |A_j|=r$$
and $$c_j=\phi_j(c|_{A_j})$$ for some function
$$\phi_j: A_j \rightarrow \F$$ for all $c \in C$. The set $A_j$ is
called a recovery set for the $j$-th coordinate. In this section, we see how cyclic extensions naturally lead to LRCs. 

\subsection{LRCs from cyclic extensions of Suzuki curves}
The Suzuki curve $S_q$ may be described by the equation
$$
S_q: y^q+y=x^{q_0} \left( x^q+x \right)
$$
where $q_0=2^s$, $q=2q_0^2$, and $s \in \N$. It is an optimal curve over $\F_q$, having $q^2+1$ $\F_q$-rational points. Indeed, if $a,b \in \F_q$, $a^q=a$ and $b^q=b$;  since $\mathrm{char \ } \F_q=2$, $b^q+b=0=a^{q_0} \left( a^q+a \right)$. In addition, there is a unique point at infinity $P_{\infty}$ corresponding to $x=z=0$ and $y=1$. The genus of $S_q$ is $q_0 \left(q-1 \right)$ \cite[Lemma 1.9]{HS}. It is maximal over $\F_{q^4}$, having $q^4+1+2q_0q^2(q-1)$ $\F_{q^4}$-rational points \cite[Equation (7)]{Eid}. 
Define 
$$
\tilde{S}_q: \begin{cases} \begin{array}{l} y^q+y=x^{q_0} \left( x^q+x \right) \\ t^m=x^q+x. \end{array} \end{cases}
$$
where $m=q-2q_0+1$. 
The curve $\tilde{S}_q$ has a unique point at infinity, and affine points will be denoted $P_{abc}:=(a:b:c:1)$ to mean the unique zero of $x-a$, $y-b$, and $t-c$, just as those of $S_q$ will be denoted by $P_{ab}$. 
The genus of $\tilde{S}_q$ is $\frac{q^3-2q^2+q}{2}$ \cite{Ska}. According to \cite{Ska}, the number of $\F_{q^4}$-rational points on $\tilde{S}_q$ that are not $\F_q$-rational is 
$$q^5-q^4+q^3-q^2;$$
see also \cite[Section 3]{Ska} for a discussion of the points on this curve. 
Define
$$
\begin{array}{lccc}
g: &\tilde{S}_q &\rightarrow &S_q \\
&P_{abc} &\mapsto &P_{ab} 
\end{array}
$$
Let 
\begin{equation} \label{S1}
S:=S_q \left(\F_{q^4} \right) \setminus S_q \left( \F_q \right).
\end{equation}
Then $|S|=q^4+2q_0q^2(q-1)-q^2$ \cite[Equations (4)-(7)]{Eid}. Set $D:=\sum_{P \in \mathcal{D}} P$ where 
\begin{equation} \label{D1}
\mathcal{D}:=g^{-1} \left( S \right) = \left\{ P_{abc} \in \tilde{S}_q \left(\F_{q^4} \right): c \neq 0 \right\}.
\end{equation}
For each $P_{ab} \in S$, $g^{-1} \left( P_{ab} \right) = \left\{ P_{abc} : c^m=a^q+a \right\}$, so 
\begin{equation} \label{preim}
| g^{-1} \left( P_{ab} \right) | = q-2q_0+1.
\end{equation}

Recall that given a divisor $G$ on a curve $X$ over a field $\F$, the space of functions determined by $G$, sometimes called the Riemann-Roch space of $G$, is 
$$
\mathcal{L}(G):= \left\{ f \in \F(X): (f) \geq -G \right\} \cup \left\{ 0 \right\},$$
where $\F(X)$ denotes the set of rational functions on $X$, and $(f)$ denotes the divisor of the function $f$; to say that $(f)=\sum_{Q \in \mathcal{Z}} a_Q Q - \sum_{P \in \mathcal{P}} b_P P$ with $a_Q, b_P \in \Z^+$ means $f$ has  a zero of order $a_Q$ at $Q$ and a pole of order $b_P$ at $P$. We use the standard notation $(f)_0:=\sum_{Q \in \mathcal{Z}} a_Q Q$ to denote the zero divisor of $f$ and $(f)_{\infty}:=\sum_{P \in \mathcal{P}} b_P P$ to denote the pole divisor of $f$. Let $\alpha \in \Z^+$, and consider the divisor  
$$
G:=\alpha \left( P_{\infty} + \sum_{a,b \in \F_q} P_{ab}  \right)
$$
on $S_q$. It is worth noting that $\mathcal{L} \left(\alpha \left( P_{\infty} + \sum_{a,b \in \F_q} P_{ab}  \right) \right) \cong \mathcal{L} \left( \alpha \left( q^2+1 \right) P_{\infty} \right)$ \cite{Eid}. 
According to \cite[Theorem 1]{Eid}, a basis for $\mathcal{L}(G)$ is given by 
$$
\mathcal{B}:=\left\{ \frac{x^ay^bu^cv^d}{(x^q+x)^e} : \begin{array}{ll} aq+b(q+q_0)+c(q+2q_0)\\ \hspace{.3in} +d(q+2q_0+1) \leq \alpha + eq^2 \\
a \in \left\{ 0, \dots, q-1 \right\}, b \in \left\{ 0, 1 \right\}, \\ c, d \in \left\{ 0, \dots, q_0-1 \right\}, e \in \left\{ 0, \dots, \alpha \right\} \end{array} \right\} \subseteq \F_{q^4} \left( S_q \right)
$$
where $$u=x^{2q_0+1} -y^{2q_0}$$ and $$v= xy^{2q_0} - u^{2q_0}.$$ Set 
$$V:= \left< f t^i : i = 0, \dots, m-2; f \in \mathcal{B} \right>_{\F_{q^4}}.$$
\rmv{where  $M:=dim \mathcal{L}(G)=\alpha \left( q^2+1 \right) - q_0(q-1)+1$. }Now define
$$
\begin{array}{llll}
ev: &V &\rightarrow & \F_{q^4}^{\left( q-2q_0+1 \right) \left( q^4+2q_0q^2 \left( q-1 \right) -q^2 \right)} \\
&f &\mapsto &\left( f \left( P_{abc} \right) \right)_{P_{abc} \in \tilde{S}_q \left( \F_{q^4} \right) \setminus \tilde{S}_q \left( \F_q \right)},
\end{array}
$$
and set $C(D,G,g):=ev(V)$. Note that the evaluation map $ev$ is well-defined, as  $\mid \mathcal D \mid = (q-2q_0+1)(q^4+2q_0q^2(q-1)-q^2)$ and  $f \in V$ has no poles at points in  $\mathcal D$. One may notice that 
\begin{equation}\label{V}
V \subseteq \mathcal L \left( \left(m \alpha + (m-2)q^2 \right) \tilde{P}_{\infty}+ m \alpha \sum_{a, b \in \F_q} P_{a,b,0} \right)
\end{equation} where $\tilde{P}_{\infty}$ 
denotes the unique point of $\tilde{S}_q$ lying above $P_{\infty}$. Let $$G':=\left(m \alpha + (m-2)q^2 \right) \tilde{P}_{\infty}+ m \alpha \sum_{a, b \in \F_q} P_{a,b,0}.$$ We are now ready to state the result.

\begin{theorem} \label{thm}
Suppose $C(D,G,g)$ is constructed as above where $\deg G' < |S|$. Then 
$C(D,G,g)$ is an $[n,k,d]$ code over $\F_{q^4}$ with locality $q-2q_0$, 
$$n=\left( q-2q_0+1 \right) \left( q^4+2q_0q^2 \left( q-1 \right) -q^2 \right),$$ $$k=\left(q-2q_0\right) \left( \alpha \left( q^2+1 \right) - q_0 \left( q-1 \right) +1 \right),$$ and $$d \geq n -\left( m \alpha q^2+m \alpha+(m-2)q^2 \right).$$
\end{theorem}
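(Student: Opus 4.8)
The plan is to verify the three parameters separately, treating $C(D,G,g)$ as a Tamo--Barg style locally recoverable code built from the degree-$m$ cover $g \colon \tilde{S}_q \to S_q$, where $m = q - 2q_0 + 1$, and reading off locality, dimension, and distance in turn.

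First I would compute the length and establish the locality. The length is simply $n = |\mathcal{D}|$, and since $\mathcal{D} = g^{-1}(S)$ with each fiber of size $m$ by (\ref{preim}), we get $n = m|S| = (q-2q_0+1)(q^4 + 2q_0q^2(q-1) - q^2)$. For the locality, fix a coordinate indexed by $P_{abc} \in \mathcal{D}$ and let its recovery set $A$ be the remaining $m-1$ points of the fiber $g^{-1}(P_{ab}) = \{P_{abc'} : (c')^m = a^q + a\}$. The key observation is that every $P_{ab} \in S$ has $a \notin \F_q$: otherwise $a^q + a = 0$ would force $b^q + b = 0$, hence $b \in \F_q$, making $P_{ab}$ an $\F_q$-rational point, a contradiction. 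Thus $a^q + a \neq 0$, so by (\ref{preim}) the fiber consists of $m$ distinct points, all with nonzero $t$-coordinate and hence all lying in $\mathcal{D}$. Writing a codeword as $ev(F)$ with $F = \sum_{i=0}^{m-2} f_i t^i$ and $f_i \in \mathcal{L}(G)$, its restriction to this fiber is $F(P_{abc'}) = \sum_{i=0}^{m-2} f_i(P_{ab})(c')^i = p(c')$, the evaluation of a single polynomial $p$ of degree at most $m-2$ at the $m$ distinct values $c'$. Since $m-1$ evaluations at distinct nodes determine a polynomial of degree at most $m-2$, the value at $P_{abc}$ is recovered from the $m-1$ values on $A$ by Lagrange interpolation; hence the locality is $r = m-1 = q-2q_0$.

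Next I would pin down the dimension. Because $t^m = x^q + x$ and $[\F_{q^4}(\tilde{S}_q) : \F_{q^4}(S_q)] = m$, the powers $1, t, \dots, t^{m-1}$ are linearly independent over $\F_{q^4}(S_q)$; combined with the fact that $\mathcal{B}$ is an $\F_{q^4}$-basis of $\mathcal{L}(G)$, the generators $\{ f t^i : f \in \mathcal{B},\ 0 \le i \le m-2\}$ are linearly independent, so $\dim V = (m-1)|\mathcal{B}| = (q-2q_0)(\alpha(q^2+1) - q_0(q-1)+1)$. It then remains to show $ev|_V$ is injective, so that $k = \dim C = \dim V$. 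This is where the hypothesis $\deg G' < |S|$ enters: by (\ref{V}) we have $V \subseteq \mathcal{L}(G')$, and since $|S| \le m|S| = n$, any nonzero $F \in \mathcal{L}(G')$ has at most $\deg G' < |S| \le n = |\mathcal{D}|$ zeros and therefore cannot vanish on all of $\mathcal{D}$.

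Finally, the minimum distance follows from the same containment. A nonzero codeword $ev(F)$ comes from some $F \in \mathcal{L}(G') \setminus \{0\}$, which has at most $\deg G'$ zeros on $\tilde{S}_q$; hence at most $\deg G'$ of the $n$ evaluation points can be zeros, giving weight at least $n - \deg G'$. A direct computation of $\deg G' = (m\alpha + (m-2)q^2) + m\alpha \cdot q^2 = m\alpha q^2 + m\alpha + (m-2)q^2$, using that there are exactly $q^2$ points $P_{a,b,0}$ with $a, b \in \F_q$, yields $d \ge n - (m\alpha q^2 + m\alpha + (m-2)q^2)$. I expect the main obstacle to be the locality step: one must verify carefully that each fiber over $S$ splits completely into $m$ distinct points of $\mathcal{D}$ and that the codeword restricts to a polynomial of degree at most $m-2$, since this degree bound (guaranteed by truncating the $t$-powers at $i = m-2$) is exactly what makes recovery from $m-1$ symbols possible.
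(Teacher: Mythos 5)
Your proposal is correct and follows essentially the same route as the paper: length via fiber counting over $S$, locality via Lagrange interpolation of the degree-$(m-2)$ polynomial in $t$ on each fiber, dimension via linear independence of $\{ft^i\}$ plus injectivity of $ev$ from $\deg G' < |S|$, and the distance bound from the Goppa-type estimate. Your minimum-distance step uses $d \geq n - \deg G'$ with $V \subseteq \mathcal{L}(G')$, which is precisely the alternative the paper itself notes at the end of its proof; your explicit check that each fiber over $S$ splits into $m$ distinct points of $\mathcal{D}$ (since $P_{ab} \in S$ forces $a^q + a \neq 0$) is a nice detail the paper leaves implicit.
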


\begin{proof}
The map $ev$ is injective, since $\deg G' < |S|$ guarantees the kernel of the evaluation map is $\{ 0\}$; this may be observed by noting that if $f \in \ker ev \setminus \{ 0 \}$ then $f$ would have more zeros than poles. Hence, the dimension is given by $\dim_{\F{q^4}}V$ which follows from the facts that $\left\{ t^i : i = 0, 1, \dots, m-1\right\}$ is a basis of $\F_{q^4}(\tilde{S}_q)/ \F_{q^4}({S}_q)$; $\mathcal B$ is a basis for $\F_{q^4}({S}_q)/\F_{q^4}$; and $$\mid \mathcal B \mid = \left( \alpha \left( q^2+1 \right) - q_0 \left( q-1 \right) +1 \right)$$ according to  \cite[Remark 1]{Eid}. We claim that $R:=g^{-1}\left( P_{ab} \right) \setminus \left\{ P_{abc} \right\}$ is a recovery set for the position corresponding to $P_{abc}$. Suppose $f \in V$. Then $$f(x,y,t) = \sum_{i=0}^{m-2} \sum_{j=1}^M a_{ij} f_j^* t^i$$ for some  $a_{ij} \in \F_{q^4}$ and $f_j^* \in \mathcal B$, where $M:= \mid \mathcal B \mid$. Notice that $f(a,b,T) \in \F_{q}\left[T\right]$ and $\deg_T f(a,b,T) \leq m-2$. Hence, $f(a,b,c)$ can be recovered using the $m-1$ interpolation points: $P_{abc'} \in R$. As a result, $f \left( P_{abc} \right)$ may be recovered using only elements of $R$. 

To determine a bound on the minimum distance $d$, we use that $$d \geq wt(ev(h)) \geq n-\deg(h)_{0}$$ where $h=ft^{m-2}$ and $f \in \mathcal{B} \subseteq \mathcal{L}(G)$. Then $\deg(h)_0\geq m \deg (G) + (m-2) q^2 \geq m\alpha(q^2+1)+(m-2)q^2$ as $G$ is a divisor of degree $\alpha(q^2+1)$ on $S_q$, $[\tilde{S}_q: S_q]=m$, and $(t)$ is a divisor on $\tilde{S}_q$ with zero divisor of degree $q^2$. As a result 
$$
d \geq n - \left( m \alpha (q^2+1)+(m-2)q^2 \right).
$$
Alternatively, the bound on the minimum distance may be seen as a consequence of $d \geq n - \deg G'$ using (\ref{V}). 
\end{proof}

\begin{example}
Let $q=8$ and $q_0=2$, so $q^4=4096$. Notice that the Suzuki curve $$S_8: y^8+y=x^2 \left( x^8 + x \right)$$ has $64$ $\F_8$-rational points and $5888$ $\F_{4096}$-rational points. Here, $|S|=5824$ and $n=29120$. Then $C(D,G,g)$ has locality  $4$. We can compare this with an LRC  $C'$ from the Hermitian curve $y^{64}+y=x^{65}$ over the same field, $\F_{4096}$. Using a projection onto the $x$-coordinate gives a code of length $262144$ with locality $63$ whereas projection onto the $y$-coordinate yields locality $64$. Hence, the construction using $\tilde{S}_8$ has a smaller ratios of locality to code length and to alphabet size. 
\end{example}

\begin{remark} \label{rmk}
\begin{enumerate}
\item Other bounds on the minimum distance of the codes in Theorem \ref{thm} may be given; see \cite{TBV} for instance. 
\item 
Alternatively, an LRC may be constructed using the projection 
$$
\begin{array}{lccc}
g: &\tilde{S}_q &\rightarrow &C_m\\
&P_{abc} &\mapsto &Q_{ac} 
\end{array}
$$
where $C_m$ denotes the curve given by $t^m=x^q+x$ and $Q_{ac}$ denotes the common zero of $x-a$ and $t-c$. Let $S$ be as in (\ref{S1}), $D$ as in (\ref{D1}), and  $G':=\alpha Q_{\infty}$  where $Q_{\infty}$ is the point at infinity on $C_m$. Then a basis for $\mathcal{L}\left( \alpha Q_{\infty} \right)$ is given by $$\mathcal{B'}:=\left\{ t^i x^j : i \geq 0, j \in \left\{ 0, \dots, q-1 \right\}, qi+mj \leq \alpha \right\};$$ see, for instance, \cite[Lemma 12.2(i)]{HKT}. Use this to define $$
V = \left< f y^i: i \in \left\{ 0, \dots, q-2 \right\}, f \in \mathcal{B'} \right>.
$$
The code $C(D,G',g)$ has locality $q-1$ and dimension $\left( q-1 \right) |\mathcal{B}|$. \end{enumerate}
\end{remark}

In Section \ref{avail}, we will see how these two approaches can be combined to give LRCs with availability. Before doing so, we turn our attention to cyclic extensions of Ree curves.

\subsection{LRCs from cyclic extensions of Ree curves}

The Ree curve $R_q$ may be described by the equation
$$
R_q: 
\begin{cases}
y^q-y=x^{q_0} \left( x^q-x \right)\\
z^q-z=x^{2q_0} \left( x^q-x \right)\\
\end{cases}
$$
where $q_0=3^s$, $q=3q_0^2$, and $s \in \N$. It is optimal over $\F_{q^6}$. In addition, there is a unique point at infinity. The genus of $R_q$ is $\frac{3}{2}q_0 \left(q-1 \right) \left( q+q_0+1 \right)$ \cite{HS}. Define $$ 
\tilde{R}_q: \begin{cases} \begin{array}{l} y^q-y=x^{q_0} \left( x^q-x \right)\\
z^q-z=x^{2q_0} \left( x^q-x \right) \\ t^m=x^q-x \end{array} \end{cases}
$$
where $m=q-3q_0+1$. 
The curve $\tilde{R}_q$ has a unique point at infinity, and affine points will be denoted $P_{abcd}:=(a:b:c:d:1)$ to mean the unique zero of $x-a$, $y-b$, $z-c$ and $t-d$, just as those of $R_q$ will be denoted by $P_{abc}$. 
The genus of $\tilde{R}_q$ is $\frac{q^4-2q^3+q}{2}$. According to \cite{MTZ}, the number of $\F_{q^6}$-rational points on $\tilde{R}_q$ that are not $\F_q$-rational is 
$$q^7-q^6+q^4-q^3;$$
see also \cite{Ska}. 
Define
$$
\begin{array}{lccc}
g: &\tilde{R}_q &\rightarrow &R_q \\
&P_{abcd} &\mapsto &P_{abc} 
\end{array}
$$
and let 
\begin{equation} \label{S}
S:=R_q \left(\F_{q^6} \right) \setminus R_q \left( \F_q \right).
\end{equation}
\rmv{Then $|S|=q^4+2q_0q^2(q-1)-q^2$ \cite{Eid}. }Set $D:=\sum_{P \in \mathcal{D}} P$ where 
\begin{equation} \label{D}
\mathcal{D}:=g^{-1} \left( S \right) = \left\{ P_{abcd} \in \tilde{R}_q \left(\F_{q^6} \right): d \neq 0 \right\}.
\end{equation}
For each $P_{abc} \in S$, $g^{-1} \left( P_{abc} \right) = \left\{ P_{abcd} : d^m=a^q-a \right\}$, so 
\begin{equation} \label{preim}
| g^{-1} \left( P_{abc} \right) | = q-3q_0+1.
\end{equation}

Consider the divisor  
$
G=\alpha P_{\infty}
$
on $R_q$ with $m \deg G + (m-2) \deg (t)_{\infty} < |S|$. Set 
$$V:= \left< f t^i : i = 0, \dots, m-2; f \in \mathcal{L}(G) \right>_{\F_{q^6}}.$$
\rmv{where  $M:=dim \mathcal{L}(G)=\alpha \left( q^2+1 \right) - q_0(q-1)+1$. }Now define
$$
\begin{array}{llll}
ev: &V &\rightarrow & \F_{q^6}^{|\mathcal D|} \\
&f &\mapsto &\left( f \left( P_{abcd} \right) \right)_{P_{abcd} \in \tilde{R}_q \left( \F_{q^6} \right) \setminus \tilde{R}_q \left( \F_q \right)},
\end{array}
$$
and set $C(D,G,g):=ev(V)$.

\begin{proposition} \label{thm2}
Suppose $C(D,G,g)$ is constructed as above. Then 
$C(D,G,g)$ is an $[q^7-q^6+q^4-q^3,(m-1)\ell(G)]$ code over $\F_{q^6}$ with locality $q-3q_0$.
\end{proposition}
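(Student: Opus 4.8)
The plan is to follow the proof of Theorem~\ref{thm} almost verbatim, since $C(D,G,g)$ is built from $\tilde{R}_q$ in exactly the way the Suzuki code was built from $\tilde{S}_q$; the only structural difference is that here we work with an abstract Riemann--Roch basis of $\mathcal{L}(G)=\mathcal{L}(\alpha P_\infty)$ rather than the explicit basis $\mathcal{B}$ available in the Suzuki setting. Three assertions must be established: the length $n=q^7-q^6+q^4-q^3$, the dimension $k=(m-1)\ell(G)$, and the locality $q-3q_0$. No minimum distance claim appears in the statement, so that portion of the Theorem~\ref{thm} argument may be omitted.

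First I would record that $n=|\mathcal{D}|$. Since $g$ has degree $m$ and each fiber $g^{-1}(P_{abc})$ over a point of $S$ consists of exactly $m=q-3q_0+1$ points by~(\ref{preim}), we obtain $|\mathcal{D}|=m|S|$, which matches the count $q^7-q^6+q^4-q^3$ of $\F_{q^6}$-rational points of $\tilde{R}_q$ that are not $\F_q$-rational quoted from \cite{MTZ}. For the dimension I would first argue that $ev$ is injective. Because $g$ is totally ramified at $P_\infty$ with index $m$ and $\tilde{R}_q$ has a unique point $\tilde{P}_\infty$ at infinity, every $f\in\mathcal{L}(G)$ pulls back to a function with a pole of order at most $m\alpha$ at $\tilde{P}_\infty$, while $t$ has all of its poles at $\tilde{P}_\infty$. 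Consequently each generator $ft^i$ with $0\le i\le m-2$ lies in $\mathcal{L}\bigl((m\alpha+(m-2)\deg(t)_\infty)\tilde{P}_\infty\bigr)$, a divisor of degree $m\deg G+(m-2)\deg(t)_\infty<|S|$. Thus a nonzero $h\in\ker ev$ would have pole degree strictly less than $|S|\le|\mathcal{D}|$, yet at least $|\mathcal{D}|$ zeros, contradicting that a nonzero rational function has equally many zeros and poles. Hence $k=\dim_{\F_{q^6}}V$, and this equals $(m-1)\ell(G)$ because $\{1,t,\dots,t^{m-1}\}$ is a basis of $\F_{q^6}(\tilde{R}_q)/\F_{q^6}(R_q)$, so the products $ft^i$ with $f$ ranging over a basis of $\mathcal{L}(G)$ and $0\le i\le m-2$ are linearly independent over $\F_{q^6}$.

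For the locality I would exhibit $R:=g^{-1}(P_{abc})\setminus\{P_{abcd}\}$ as a recovery set for the coordinate indexed by $P_{abcd}$. Writing $f=\sum_{i=0}^{m-2}f_i t^i\in V$ with $f_i\in\mathcal{L}(G)$, the values of $f$ along the fiber $g^{-1}(P_{abc})$ are $\sum_{i=0}^{m-2}f_i(a,b,c)(d')^i$, that is, the evaluations of a single polynomial of degree at most $m-2$ in the coordinate $d'$. Since every point of $S$ has $a\notin\F_q$, we have $a^q-a\neq0$, and as $3\nmid m$ the $m$ values $d'$ with $(d')^m=a^q-a$ are distinct; thus any $m-1$ of them determine the polynomial by interpolation, and $f(P_{abcd})$ is recovered from the $|R|=m-1=q-3q_0$ symbols indexed by $R$.

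The main obstacle, and the reason this is stated separately from Theorem~\ref{thm}, is that no explicit basis of $\mathcal{L}(\alpha P_\infty)$ on the Ree curve is available, so every step must be phrased intrinsically: the injectivity bound has to be read off from the ramification of $g$ at $P_\infty$ and the pole order $\deg(t)_\infty$ rather than from an explicit list of generators, and the dimension must be carried through the abstract quantity $\ell(G)$. The one genuinely curve-specific fact to verify is that $\tilde{R}_q$ possesses a single rational point at infinity at which $g$ is totally ramified; this is what guarantees that all poles of the generators $ft^i$ concentrate at $\tilde{P}_\infty$, making the degree comparison against $|S|$ in the injectivity argument legitimate.
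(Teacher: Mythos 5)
Your proposal is correct and is essentially the argument the paper intends: its own proof of Proposition~\ref{thm2} consists of the single sentence that it ``follows similarly to that of Theorem~\ref{thm},'' and your write-up is precisely that adaptation (length via the fiber count $|\mathcal{D}|=m|S|$, injectivity of $ev$ from the hypothesis $m\deg G+(m-2)\deg(t)_\infty<|S|$, dimension from the basis $\{t^i\}$ of the degree-$m$ extension, and locality by interpolation along fibers). Your added checks --- that points of $S$ have $a\notin\F_q$ so the fiber has $m$ distinct points, and that $3\nmid m$ --- are correct details the paper leaves implicit.
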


\begin{proof}
This follows similarly to that of Theorem \ref{thm}.
\end{proof}

\begin{remark} \label{ree_rmk} \begin{enumerate} 
\item Explicit bases for $\mathcal{L}(G)$ where $G$ is a divisor on the Ree curve is a topic of current research for arbitrary $q$, even for the case where $G$ is a multiple of the point at infinity. See \cite{Ree} for recent work on related topics. The work \cite{AE} also highlights the challenges of this problem, which was originally stated in \cite{Pedersen}; indeed, when $s=1$ (so $q=27$), the associated Weierstrass semigroup has more than $100$ generators, compared with $2$ in the Hermitian case and $4$ for Suzuki. Hence, the dimension of the codes described in Proposition \ref{thm2} cannot be specified more precisely the expression given above for arbitrary $q$. However, for specific small values of $q$, a set of functions which generate $\mathcal{L}(G)$ may be found computationally. We include this result so that if the theory progresses and sheds more light on this value, LRCs are an immediate consequence. We also note that our interest in the Ree curve is partially motivated by the fact that it allows for results over fields of odd characteristic whose cardinalities are odd powers of primes (unlike the Suzuki curve, which is considered over a field of even characteristic, and the Hermitian curve which is considered over a field with square cardinality). 

\item Also, as in Remark \ref{rmk}, the projection 
$$
\begin{array}{lccc}
g: &\tilde{R}_q &\rightarrow &C_m\\
&P_{abcd} &\mapsto &Q_{ad}, 
\end{array}
$$
where $C_m: t^m=x^q-x$ and $Q_{ad}$ denotes the common zero of $x-a$ and $t-c$, may be used to define a code with different recovery sets than those considered above. 
\item A bound on the minimum distance is given in \cite[Theorem 3.1]{HMM}
\end{enumerate}
\end{remark}

\section{Locally recoverable codes with availability from products} \label{avail}
 
 \subsection{Availability from cyclic extensions viewed as fiber products of curves}
If every coordinate $j$ has $t$ disjoint recovery sets, then $C$ is said to have availability $t$ to
reflect that information is more available to users in the presence of erasure.
In \cite{HMM}, fiber products of curves are used to construct locally recoverable codes with availability. We review the construction in the case $t=2$ below. 

Suppose $X= Y_1 \times_Y Y_2$ where $Y_1$, $Y_2$, and $Y$ are curves over a finite field $\F$ with rational, separable maps $h_i: Y_i \rightarrow Y$. The $\F_q$-rational points of $X$ are $\left\{ \left(P_1,P_2 \right) : P_i \textnormal{ \ is an } \F_q-\textnormal{rational point on } Y_i, h_1(P_1)=h_2(P_2) \right\}$. Thus, there are projection maps $g_i: X \rightarrow Y_i$ defined by $g_i(P_1,P_2)=P_i$; a rational, separable map $g: X \rightarrow Y$ given by $g=h_1 \circ g_1 = h_2 \circ g_2$; maps of function fields $h_i^*:\F(Y) \rightarrow \F(Y_1)$ given by $h_i^*(f):=f \circ h_i$; and primitive elements $x_i$ of the extensions $\F \left(Y_i \right) / h_i^* \left(\F \left( Y \right) \right)$ . Let $S$ be a set of $\F$-rational points on $Y$, and take $D:=\sum_{P \in g^{-1} \left( S \right) } P$.  Choose an effective divisor $G$ on $Y$ of degree $\ell < |S|$, and take a basis $\left\{ f_1, \dots, f_t  \right\}$ for $\mathcal{L}(G)$. Set $$V:=Span \left\{ \left( f_i \circ g \right) x_1^{* e_1} x_2^{* e_2} : 1 \leq i \leq t, 0 \leq e_i \leq \deg h_i - 2 \right\}$$ where $x_i^*=g_i^*(x_i)$ given that $g_i^*:\F(Y_i) \rightarrow \F(X)$ for $i=1,2$. Consider 
$$
\begin{array}{llll}
ev: &V &\rightarrow & \F^n \\
&f &\mapsto &\left( f \left( P_i \right) \right)_{P_i \in \supp D}.
\end{array}
$$
Then the code $C(D,G,g,g_1,g_2):=ev(V)$ has length $|D|=\deg g |S|$, dimension $$t \left( \deg h_1-1 \right) \left( \deg  h_2 -1 \right),$$ and minimum distance bounded below according to \cite{HMM}. For $i=1,2$, $$g_i^{-1} \left( g_i \left( Q\right) \right) \setminus \left\{ Q \right\}$$ serves as a recovery set for $Q \in S$. Hence, $C(D,G,g,g_1,g_2)$ has locality $2$. Next we apply this construction to $\tilde{S}_q$ and $\tilde{R}_q$. 

\subsubsection{Cyclic extensions of Suzuki curves as fiber products.}

Because 
$\tilde{S}_q$ is the fiber product of covers $S_q \rightarrow \P_x^1$ and $C_m \rightarrow \P_x^1$, we may apply the construction to obtain a code with availability $2$ and localities $m-1$ and $q-1$; that is, every coordinate has 2 disjoint recovery sets, one of cardinality $q-2q_0$ and one of cardinality $q-1$. To do this, consider the projection maps $g_1: \tilde{S}_q \rightarrow C_m$, $g_2: \tilde{S}_q \rightarrow S_q$, and $g: \tilde{S}_q \rightarrow \P_x^1$. We take $S$ as in (\ref{S1}), $D$ as in (\ref{D1}), and $G:=\alpha P_{\infty}$ where $P_{\infty}$ is the unique point at infinity on $\P_x^1$. Fix a basis $\mathcal{B}$ of $\mathcal{L}(G)$, and  
$$V:= \left< f y^i t^j : 0 \leq i \leq q-2, 0 \leq j \leq m-2, f \in \mathcal{B} \right>_{\F_{q^4}}.$$

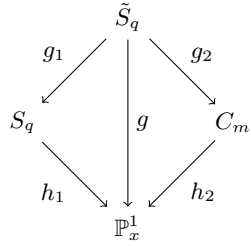
\begin{figure}  
\centering
\begin{tikzpicture}[node distance = 2cm, auto]
  \node (Y) {$\P_x^1$};
  \node (Y1) [node distance=1.4cm, left of=Y, above of=Y] {$S_q$};
   \node (Y2) [node distance=1.4cm, right of=Y, above of=Y] {$C_m$};
   \node (X) [node distance=1.4cm, left of=Y2, above of=Y2] {$\tilde{S}_q$};
  \draw[->] (Y1) to node [swap] {$h_1$} (Y);
  \draw[->] (Y2) to node {$h_2$} (Y);
  \draw[->] (X) to node [swap] {$g_1$} (Y1);
  \draw[->] (X) to node {$g_2$} (Y2);
  \draw[->] (X) to node {$g$} (Y);
 \end{tikzpicture}
   \caption{Cyclic extension of Suzuki curve viewed as a fiber product}
\label{Suzuki_fig}
\end{figure}

\begin{theorem} \label{thm3}
Suppose $C(D,G,g,g_1,g_2)$ is constructed as above. Then the code
$C(D,G,g,g_1,g_2)$ is an $[n,k,d]$ code over $\F_{q^4}$ with availability $2$ and recovery sets for each coordinate of sizes 
$q-2q_0$ and $q-1$, where  
$$n=\left( q-2q_0+1 \right) \left( q^4+2q_0q^2 \left( q-1 \right) -q^2 \right),$$ $$k=\left(q-2q_0\right) \left( \alpha +1 \right) \left( q-1 \right),$$ and $$d \geq n - \left( \alpha m q +(q-2)m(q+q_0)+(m-2)q^2\right).$$
\end{theorem}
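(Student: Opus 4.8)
The plan is to apply the general fiber-product construction reviewed above to the presentation of $\tilde{S}_q$ as $S_q \times_{\P_x^1} C_m$, and then to make each of the three output parameters explicit by computing the arithmetic invariants of the two covers $h_1 : S_q \to \P_x^1$ and $h_2 : C_m \to \P_x^1$. Here $\deg h_1 = q$ (as $y$ satisfies $y^q+y = x^{q_0}(x^q+x)$ over $\F_{q^4}(x)$) and $\deg h_2 = m = q-2q_0+1$ (as $t^m = x^q+x$), while $G = \alpha P_\infty$ on $\P_x^1$ has $\ell(G) = \alpha+1$. Substituting $\ell(G) = \alpha+1$, $\deg h_1 - 1 = q-1$, and $\deg h_2 - 1 = m-1 = q-2q_0$ into the general dimension formula $\ell(G)(\deg h_1 - 1)(\deg h_2 - 1)$ yields $k = (q-2q_0)(\alpha+1)(q-1)$, and the length reduces to $n = |\mathcal{D}| = (q-2q_0+1)|S|$ by (\ref{preim}), which equals the claimed value on substituting $|S| = q^4 + 2q_0q^2(q-1) - q^2$.

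For availability and the recovery sets I would argue exactly as in the proof of Theorem \ref{thm}. For a coordinate $P_{abc}$, the projection $g_2$ to $S_q$ gives the fiber $g_2^{-1}(g_2(P_{abc})) \setminus \{P_{abc}\} = \{ P_{abc'} : c'^m = a^q+a,\ c' \neq c\}$ of size $m-1 = q-2q_0$, while the projection $g_1$ to $C_m$ gives $g_1^{-1}(g_1(P_{abc})) \setminus \{P_{abc}\} = \{ P_{ab'c} : b'^q+b' = a^{q_0}(a^q+a),\ b' \neq b\}$ of size $q-1$. One set fixes $(a,b)$ and varies $t$, the other fixes $(a,c)$ and varies $y$, so they meet only in $P_{abc}$ and are disjoint, giving availability $2$. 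Recovery along each fiber is Lagrange interpolation: every $f \in V$ is a combination of monomials $x^e y^i t^j$ with $0 \le i \le q-2$ and $0 \le j \le m-2$, so its restriction to a $g_2$-fiber is a polynomial of degree $\le m-2$ in $t$ (recoverable from the $m-1$ fiber points) and its restriction to a $g_1$-fiber is a polynomial of degree $\le q-2$ in $y$ (recoverable from the $q-1$ fiber points).

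Injectivity of $ev$, needed to identify $k$ with $\dim_{\F_{q^4}} V$, follows as in Theorem \ref{thm}: a nonzero element of $V$ has all of its poles at $\tilde{P}_\infty$, with total order bounded by $\deg G'$ computed below, so under the operative degree hypothesis $\deg G' < n$ it cannot vanish at all $n$ points of $\mathcal{D}$. That the spanning monomials $x^e y^i t^j$ (for $0 \le e \le \alpha$, $0 \le i \le q-2$, $0 \le j \le m-2$) are linearly independent follows from the linear disjointness of $\F_{q^4}(S_q)$ and $\F_{q^4}(C_m)$ over $\F_{q^4}(\P_x^1)$, guaranteed by $\gcd(q,m)=1$; then $\{ y^i t^j : 0 \le i \le q-1,\ 0 \le j \le m-1\}$ is an $\F_{q^4}(\P_x^1)$-basis of $\F_{q^4}(\tilde{S}_q)$.

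The main work, and the step I expect to be the principal obstacle, is making the bound $d \ge n - \deg G'$ explicit by determining the pole orders of $x$, $y$, $t$ at $\tilde{P}_\infty$. I would use that $\gcd(m,q)=1$ (since $m$ is odd and $q$ is a power of $2$) forces $\tilde{P}_\infty$ to be totally ramified, of index $m$, over $P_\infty \in S_q$; combined with the standard Suzuki valuations $v_{P_\infty}(x) = -q$ and $v_{P_\infty}(y) = -(q+q_0)$ this gives $v_{\tilde{P}_\infty}(x) = -mq$, $v_{\tilde{P}_\infty}(y) = -m(q+q_0)$, and, from $t^m = x^q+x$, $v_{\tilde{P}_\infty}(t) = -q^2$. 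Since $x$, $y$, $t$ have no other poles, every spanning monomial lies in $\mathcal{L}(G')$ with
\[
G' = \left( \alpha m q + (q-2)m(q+q_0) + (m-2)q^2 \right)\tilde{P}_\infty,
\]
the worst case $e=\alpha$, $i=q-2$, $j=m-2$ realizing $\deg G'$. Hence $V \subseteq \mathcal{L}(G')$ and $d \ge n - \deg G' = n - \left( \alpha m q + (q-2)m(q+q_0) + (m-2)q^2 \right)$, as claimed. The delicate points throughout are the total-ramification claims underlying both the independence of the monomials and these valuations, all resting on the coprimality of $m$ and $q$.
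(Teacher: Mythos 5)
Your proposal is correct and follows essentially the same route as the paper: length and dimension come from the fiber-product construction of Haymaker--Malmskog--Matthews, the two disjoint recovery sets are the $g_1$- and $g_2$-fibers with Lagrange interpolation in $y$ and in $t$ respectively, and the distance bound comes from the pole orders $mq$, $m(q+q_0)$, $q^2$ of $x$, $y$, $t$ at $\tilde{P}_\infty$ applied to the worst-case monomial $x^{\alpha}y^{q-2}t^{m-2}$. You are in fact a bit more careful than the printed proof, which writes out the interpolation only along the $t$-fiber and (apparently by a typo) swaps the descriptions of $g_1^{-1}(g_1(P_{abc}))$ and $g_2^{-1}(g_2(P_{abc}))$; your identification of the fibers, and your explicit justification of total ramification and linear disjointness via $\gcd(m,q)=1$, are the consistent versions of what the paper leaves implicit.
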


\begin{proof}
The length and dimension can be verified directly by applying  \cite[Theorem 3.1]{HMM}. To determine the minimum distance $d$, we use the fact that $d \geq n-wt(ev(h)) \geq n-\deg (h)_0$ where $h=fy^{q-2}t^{m-2}$ and $f \in \mathcal{L}(\alpha P_{\infty})$. Then $(h)=(f)+(q-2)(y)+(m-2)(t)$. Note that when considered as a functions on $\tilde{S}_q$, $\deg (f)_0 \leq \alpha m q$, $\deg (y)_0 \leq m(q+q_0)$, and $\deg (t)_0 \leq q^2$. Putting this together, we conclude that $d \geq n - \left( \alpha m q +(q-2)m(q+q_0)+(m-2)q^2\right)$, which coincides with that given in \cite[Theorem 3.1]{HMM}.

We claim that $$R^{(1)}:=g_2^{-1} \left( g_2 \left(P_{abc} \right) \right)  \setminus \left\{ P_{abc} \right\}= \left\{ P_{ab'c} : b' \in \F_{q^4} \setminus \left\{ b \right\} \right\}$$  and $$R^{(2)}:=g_1^{-1} \left( g_1 \left(P_{abc} \right) \right) \setminus \left\{ P_{abc} \right\} = \left\{ P_{abc'} : c' \in \F_{q^4}  \setminus \left\{ c \right\} \right\}$$
are  recovery sets for the position corresponding to $P_{abc}$. Suppose $f \in V$. Then $f(x,y,t) = \sum_{i=0}^{m-2} \sum_{j=1}^M a_{ij} f_j^* t^i$. Notice that $f(a,b,T) \in \F_{q}\left[T \right]$ and the degree is bounded by $\deg_T f(a,b,T) \leq m-2$. Hence, $f(a,b,c)$ can be recovered using the $m-1$ interpolation points: $P_{abc'} \in R$. As a result, $f \left( P_{abc} \right)$ may be recovered using only elements of $R$.
\end{proof}

Observe that the functions in the set $V$ are modified from the construction in Section \ref{LRC} in order to obtain multiple recovery sets for each position, thus impacting the dimension of the code. 

One might compare this with the code found in \cite[Theorem 6.1]{HMM}, which has availability $2$ with recovery sets of size $q-1$, length  $n = q(q -1)(q^2 + 2qq_0 + q + 1)$ and dimension $k =(q -1)(q - 2)(q^2 + 2qq_0 + q + 1)$. Notice that the new codes defined using $\tilde{S}_q$ give the option of using a smaller recovery set (cardinality $q-2q_0$ compared with $q-1$). 

\subsubsection{Cyclic extensions of Ree curves as fiber products.}

Because 
$\tilde{R}_q$ is the fiber product of  $R_q \rightarrow \P_x^1$ and $C_m \rightarrow \P_x^1$, we may apply this construction to obtain a code with availability $2$ and localities $m-1$ and $q-1$; that is, every coordinate has 2 disjoint recovery sets, one of cardinality $q-3q_0$ and one of cardinality $q-1$. To do this, consider the projection maps $g_1: \tilde{R}_q \rightarrow C_m$, $g_2: \tilde{R}_q \rightarrow R_q$, and $g: \tilde{R}_q \rightarrow \P_x^1$. We take $S$ as in (\ref{S}), $D$ as in (\ref{D}), and $G$ is a divisor on $\P_x^1$. Fix a basis $\mathcal{B}$ of $\mathcal{L}(G)$, and  
$$V:= \left< f y^i t^j :  0 \leq i \leq q-2, 0 \leq j \leq q-3q_0-1, f \in \mathcal{B} \right>_{\F_{q^6}}. $$

\begin{figure}  
\centering
\begin{tikzpicture}[node distance = 2cm, auto]
  \node (Y) {$\P_x^1$};
  \node (Y1) [node distance=1.4cm, left of=Y, above of=Y] {$R_q$};
   \node (Y2) [node distance=1.4cm, right of=Y, above of=Y] {$C_m$};
   \node (X) [node distance=1.4cm, left of=Y2, above of=Y2] {$\tilde{R}_q$};
  \draw[->] (Y1) to node [swap] {$h_1$} (Y);
  \draw[->] (Y2) to node {$h_2$} (Y);
  \draw[->] (X) to node [swap] {$g_1$} (Y1);
  \draw[->] (X) to node {$g_2$} (Y2);
  \draw[->] (X) to node {$g$} (Y);
 \end{tikzpicture}
   \caption{Cyclic extension of Ree curve viewed as a fiber product}
\label{Suzuki_fig}
\end{figure}

\begin{proposition}
The code $C(D,G,g,g_1,g_2)$  constructed as above is a code with parameters $[q^7-q^6+q^4-q^3,\ell(G)(q-1)(q-3q_0)]$ code over $\F_{q^6}$ with availability $2$ and recovery sets for each coordinate of sizes 
$q-3q_0$ and $q-1$.
\end{proposition}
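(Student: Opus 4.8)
The plan is to mirror the proof of Theorem~\ref{thm3}, exploiting that $\tilde{R}_q = R_q \times_{\P_x^1} C_m$ is exactly the fiber-product situation governed by \cite[Theorem~3.1]{HMM}. I would first record the length and dimension. The length is $|\mathcal{D}| = q^7 - q^6 + q^4 - q^3$, the same number of evaluation points used for the code of Proposition~\ref{thm2}. For the dimension I would verify that the spanning set $\{\, f\, y^i t^j : f \in \mathcal{B},\ 0 \le i \le q-2,\ 0 \le j \le m-2 \,\}$ of $V$ is in fact a basis, which rests on three independence facts: $\{1, t, \dots, t^{m-1}\}$ is an $\F_{q^6}(R_q)$-basis of $\F_{q^6}(\tilde{R}_q)$ because $t^m = x^q - x$ defines a degree-$m$ extension; $\{1, y, \dots, y^{q-1}\}$ is $\F_{q^6}(x)$-linearly independent because $y^q - y = x^{q_0}(x^q - x)$ is Artin--Schreier of degree $q$; and $\mathcal{B}$ is $\F_{q^6}$-linearly independent. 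Combining these gives $\dim V = \ell(G)(q-1)(m-1) = \ell(G)(q-1)(q-3q_0)$, and the injectivity of $ev$ follows as in Theorem~\ref{thm} from the degree hypothesis on $G$, which forces any nonzero kernel element to have more zeros than poles.

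Next I would establish availability $2$ by producing two disjoint recovery sets via interpolation, exactly as in Theorem~\ref{thm3}. Writing a typical element of $V$ as $f(x,y,t) = \sum_{i=0}^{q-2} \sum_{j=0}^{m-2} a_{ij}(x)\, y^i t^j$, the set $R^{(1)} := g_1^{-1}(g_1(P_{abcd})) \setminus \{P_{abcd}\} = \{\, P_{abcd'} : d'^m = a^q - a,\ d' \ne d \,\}$ varies only the $t$-coordinate, has cardinality $m - 1 = q - 3q_0$, and recovers $f(P_{abcd})$ because $f(a,b,T)$ is a polynomial in $T$ of degree at most $m-2$. Symmetrically, $R^{(2)} := \{\, P_{ab'cd} : b'^q - b' = a^{q_0}(a^q - a),\ b' \ne b \,\}$ varies only the $y$-coordinate, has cardinality $q - 1$, and recovers $f(P_{abcd})$ since $f$ has $y$-degree at most $q-2$. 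Because the points of $\mathcal{D}$ all satisfy $d \ne 0$, equivalently $a \notin \F_q$, both families consist of $\F_{q^6}$-rational points above $S$ and hence lie in $\supp D$; as $R^{(1)}$ fixes $b$ while $R^{(2)}$ fixes $d$, they meet only in the excluded point $P_{abcd}$ and are therefore disjoint.

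The step demanding the most care is the second recovery set. The projection that the fiber-product diagram supplies sends $\tilde{R}_q$ to $C_m$ and forgets \emph{both} $y$ and $z$, so its fibers have size $q^2$ and would only yield locality $q^2 - 1$. What restores locality $q-1$ is that every generator $f\, y^i t^j$ of $V$ is independent of $z$; one may therefore hold $z$ fixed together with $x$ and $t$ and vary $y$ alone. I would make this precise by identifying $R^{(2)}$ with a fiber of the intermediate projection $\tilde{R}_q \to \{\, z^q - z = x^{2q_0}(x^q - x),\ t^m = x^q - x \,\}$ that forgets $y$, checking that its $q$ points are $\F_{q^6}$-rational (the $q$ solutions of $b'^q - b' = a^{q_0}(a^q - a)$ differ by elements of $\F_q$, hence all lie in $\F_{q^6}$) and that they lie above $S$. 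Once this is in hand, a minimum-distance estimate---should one wish to record it---follows verbatim from \cite[Theorem~3.1]{HMM} by bounding $\deg(h)_0$ for $h = f\, y^{q-2} t^{m-2}$, just as in the proof of Theorem~\ref{thm3}.
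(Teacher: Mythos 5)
Your proof is correct and follows exactly the route the paper intends: the paper's own proof is a one-line appeal to the argument of Theorem~\ref{thm3}, and you have carried that argument out for $\tilde{R}_q$ (length from the point count, dimension from the three linear-independence facts plus injectivity of $ev$, availability from interpolation in $t$ and in $y$ separately). The one substantive point you add --- and it is a genuine subtlety the paper glosses over --- is that the fiber-product projection $\tilde{R}_q \to C_m$ has degree $q^2$ because it forgets both $y$ and $z$, so its literal fibers would only give recovery sets of size $q^2-1$; your fix, interpolating in $y$ alone with $x$, $z$, $t$ held fixed, is legitimate precisely because every generator of $V$ is independent of $z$, and this is exactly what is needed to justify the claimed locality $q-1$. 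One cosmetic slip: with the paper's labeling $g_1:\tilde{R}_q\to C_m$ and $g_2:\tilde{R}_q\to R_q$, the set $\left\{P_{abcd'} : d'\neq d\right\}$ is the fiber of $g_2$, not $g_1$ (the paper's proof of Theorem~\ref{thm3} contains the same swap), but the sets themselves are described correctly and the argument is unaffected.
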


\begin{proof}
The proof is similar to that of Theorem \ref{thm3}. 
\end{proof}

\begin{remark} \label{ree_rmk2}
\begin{enumerate}
\item As noted in Remark \ref{ree_rmk}, the explicit construction for codes from the Ree curve depends on that of bases for certain Riemann-Roch spaces. We provide an alternate LRC with availability construction from the Ree curve in the Section \ref{examples}. There, we see codes with more accessible parameters due to choosing functions to evaluate carefully, rather than beginning with an entire Riemann-Roch space which is difficult to describe.
\item A bound on the minimum distance is given in \cite[Theorem 3.1]{HMM}.
\end{enumerate}
\end{remark}

Observe the functions in the set $V$ are modified from the construction in Section \ref{LRC} in order to obtain multiple recovery sets for each position, thus impacting the dimension of the code.

\subsection{Availability from products of codes}
	We may also take products of codes themselves to obtain LRCs with availability, as detailed below. 
We begin with the simplest definition, the product of two codes, $C_1$ and $C_2$, which may be generalized to more factors. Examples of this construction may be found in the next section.  

\begin{definition}
Let $C_1$ be an $[n_1, k_1, d_1]$ code and $C_2$ be an $[n_2, k_2, d_2]$ code over the same alphabet $\F$. The product code of $C_1$ and $C_2$ is defined by assigning symbols from $\F$ to the pairs $(i,j) \in [n_1] \times [n_2]$ such that the symbols assigned in $[n_1] \times \{j\}$, for $j \in [n_2]$ are a codeword in $C_1$ and $\{i\} \times [n_2]$ for $i \in [n_1]$ are a codeword in $C_2$; that is, 
$$ C_1 \times C_2  :=  \{  (a_i b_j) \in \F^{[n_1]\times[n_2]} \ | \ (a_{1}, a_{2}, \ldots, a_{n_1}) \in C_1, (b_{1}, b_{2}, \ldots, b_{ n_2}) \in C_2  \} $$ 
\end{definition}

An alternative definition is to place symbols from $\F$ in an $n_1 \times n_2$ rectangular array such that each column is a codeword of $C_1$ and each row is a codeword of $C_2$. See also \cite{LMS}. 

\begin{theorem}
Let $C_1$ be an $[n_1, k_1, d_1]$ code and  $C_2$ be an $[n_2, k_2, d_2]$ code. Then the code $C_1 \times C_2$ is a $[n_1n_2, k_1k_2, d_1d_2]$ code with availability $2$. Moreover, if $C_1$ has locality $r_1$ and availability $l_1$ and $C_2$ has locality $r_2$ and availability $l_2$, then $C_1 \times C_2$ is a code of availability $l_1+l_2$ and locality $r_1+r_2$.
\end{theorem}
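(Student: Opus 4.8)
The plan is to establish the product code theorem in two stages, corresponding to the two sentences in the statement. First I would handle the basic parameters $[n_1 n_2, k_1 k_2, d_1 d_2]$ of $C_1 \times C_2$, which are classical facts about product (tensor) codes. The length is immediate from the indexing set $[n_1] \times [n_2]$. For the dimension, I would observe that the product code is precisely the tensor product $C_1 \otimes C_2$ of the two codes as vector spaces over $\F$, so a basis is given by the Kronecker products $g_i \otimes h_j$ where $\{g_i\}$ and $\{h_j\}$ are bases of $C_1$ and $C_2$; hence $\dim(C_1 \times C_2) = k_1 k_2$. For the minimum distance, I would take a nonzero codeword, viewed as an $n_1 \times n_2$ array, and note that it has at least one nonzero row, which (being a codeword of $C_2$) has weight at least $d_2$, giving at least $d_2$ nonzero columns; each such column is a nonzero codeword of $C_1$ and so has weight at least $d_1$. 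This yields weight at least $d_1 d_2$, and exactness follows by taking the outer product of minimum-weight codewords from each factor.

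Second I would verify the availability and locality claims. The availability $2$ for a generic product code comes from the two natural recovery directions: to recover the symbol at position $(i,j)$, the rest of row $i$ is a recovery set (since that row lies in $C_2$) and the rest of column $j$ is a second, disjoint recovery set (since that column lies in $C_1$). These two sets meet only at $(i,j)$ itself, which is excluded, so they are disjoint, giving availability $2$. For the refined statement, I would suppose $C_1$ has locality $r_1$ with availability $l_1$ and $C_2$ has locality $r_2$ with availability $l_2$. Fix a coordinate $(i,j)$. Along the column direction, coordinate $i$ in $C_1$ has $l_1$ disjoint recovery sets $A_1, \dots, A_{l_1} \subseteq [n_1]$; each lifts to a recovery set $A_s \times \{j\}$ for $(i,j)$ in the product code, of size $r_1$, and these remain pairwise disjoint since the $A_s$ are. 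Symmetrically, coordinate $j$ in $C_2$ has $l_2$ disjoint recovery sets $B_1, \dots, B_{l_2} \subseteq [n_2]$, lifting to $\{i\} \times B_u$ of size $r_2$. The $l_1 + l_2$ lifted sets are pairwise disjoint: two column-type sets share no element because the $A_s$ are disjoint, two row-type sets are disjoint for the same reason, and a column-type set $A_s \times \{j\}$ and a row-type set $\{i\} \times B_u$ can only intersect at $(i,j)$, which neither contains. This gives availability $l_1 + l_2$.

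The main subtlety I anticipate is reconciling the two locality numbers. The theorem asserts locality $r_1 + r_2$, yet the recovery sets I have just produced have sizes $r_1$ and $r_2$ individually, not $r_1 + r_2$. The natural reading is that locality of the product is taken to be the size of the largest recovery set used, namely $\max(r_1, r_2)$, so the stated $r_1 + r_2$ is at worst an upper bound; alternatively, if one insists all recovery sets have a common size, one may pad the smaller sets to size $r_1 + r_2$ by adjoining extra coordinates from a transversal direction, which does not affect recoverability. I would note this convention explicitly rather than grind through it, since the essential content---that each coordinate has $l_1 + l_2$ genuinely disjoint recovery sets of bounded size---is already secured by the disjointness argument above. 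This bookkeeping about how locality is counted, rather than any deep structural fact, is the only place where care is required; everything else reduces to the tensor-product structure and the independence of the row and column decoding directions.
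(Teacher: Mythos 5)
Your proposal is correct and follows essentially the same route as the paper's proof: the minimum distance is obtained by the same row-then-column counting of nonzero positions with equality witnessed by an outer product of minimum-weight codewords, availability $2$ comes from the row and column through $(i,j)$ meeting only at the excluded position, and the availability $l_1+l_2$ claim is proved by lifting the $l_1$ disjoint recovery sets of coordinate $i$ in $C_1$ to sets of the form $A_s\times\{j\}$ and the $l_2$ recovery sets of coordinate $j$ in $C_2$ to $\{i\}\times B_u$, exactly as in the paper. Two small differences are worth recording. For the dimension, you invoke the tensor-product structure and a Kronecker basis, which is cleaner and more self-contained than the paper's sketch via information sets. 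More substantively, you are right to flag the locality bookkeeping: the recovery sets actually constructed have sizes $r_1$ and $r_2$ respectively, so under the usual convention (locality equals the maximum recovery-set size) the product code has locality $\max(r_1,r_2)$, and the stated $r_1+r_2$ is only an upper bound; the paper's own proof never justifies the figure $r_1+r_2$, and elsewhere the paper itself prefers the phrasing ``localities $r_1$ and $r_2$'' for codes with recovery sets of two different sizes. Your explicit note on this convention is an improvement rather than a gap.
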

\begin{proof} 
Let $D$ denote the minimum distance of the code $C_1 \times C_2$. If $(i,j)$ is a nonzero position, then there are $d_1$ positions in the set $[n_1] \times \{j\}$ which have a nonzero entry. Suppose those nonzero positions are $(i_1, j), (i_2, j), \ldots, (i_{d_1}, j)$. For each of those nonzero positions $(i_s, j)$, there are $d_2$ nonzero positions in $\{i_s\} \times [n_2]$. Thus there are at least $d_1d_2$ nonzero positions.

In order to prove equality, let $(a_1, a_2, \ldots, a_{n_1} )$ be a codeword of weight $d_1$ in $C_1$, and $(b_1, b_2, \ldots, b_{n_2})$ be a codeword of weight $d_2$ in $C_2$. Then the codeword defined by $c_{i,j} = a_ib_j$ is the required codeword of weight $d_1d_2$.

Let $I_1$ be an information set for $C_1$ and let $I_2$ be an information set of $C_2$. The $i^{th}$ coordinate of $c \in C_1$ may be written as the linear combination $\mathbf{p_i}\mathbf{m_1}$ for a message vector $m_1 \in \F^{I_1}$. Likewise, the $i^{th}$ coordinate of $c \in C_1$ may be written as the linear combination $\mathbf{q_j}\mathbf{m_2}$ for a message vector $m_2 \in \F^{I_2}$. After placing any values in $I_1 \times I_2$, the remaining values are given by $c_{i,j} = \mathbf{p_i}\mathbf{m_1}\mathbf{q_j}\mathbf{m_2}$.

Note that position $(i,j)$ is in the two sets $[n_1] \times \{j\}$ and $\{i\} \times [n_2]$. These two sets have only $(i,j)$ in common. Thus, $[n_1]\times \{j\} \setminus (i,j)$ and $\{i\} \times [n_2] \setminus (i,j)$ are recovery sets for $(i,j)$; note that they are disjoint as required for availability. 

Consider $(i,j) \in [n_1] \times [n_2]$. As $C_1$ is a code of availability $l_1$ there are $l_1$ disjoint sets , $I_1, I_,2, \ldots, I_{l_1}$ in $[n_1]\setminus \{i\}$ from which position $i$ may be recovered. Likewise as $C_2$ is a code of availability $l_2$ there are $l_2$ disjoint sets , $J_1,J_2, \ldots, J_{l_2}$ in $[n_2]\setminus \{j\}$ from which position $j$ may be recovered. The sets $I_1 \times \{j \}$,$I_2 \times \{j \}, \ldots$, $I_{l_1} \times \{j \}$, $\{i\} \times J_1$,$\{i\} \times J_2, \ldots$, $\{i\} \times J_{l_2}$ are then $l_1+l_2$ recovery sets which are disjoint; this gives the desired availability. \end{proof}

\section{Examples} \label{examples}

A number of examples of LRCs are given in this section, and some comparisons are drawn between instances of the constructions discussed in this paper as well as those appearing elsewhere in the literature. In addition, we provide LRCs on the Ree curve via a construction that allows for computable parameters despite the issues mentioned in Remarks \ref{ree_rmk} and \ref{ree_rmk2}.

Tamo and Barg gave a seminal construction of an optimal LRC code of locality $r$ in \cite{TB}. The LRC construction is based on a set $L \subseteq  \F_q$, a partition of $L$ into disjoint subsets $A_1$, $A_2$, ... , $A_m$ where each set $A_i$ has size $r+1$ and a polynomial $g(x)$ of degree $r+1$ such that $g$ is constant on each subset $A_i$. Tamo and Barg construct an LRC code from a subcode of the Reed--Solomon code over $L$ of dimension $k'$ by evaluating the functions of the form $X^i g(X)^j$ where $0 \leq i \leq r, i \neq s$ for a fixed $0 \leq s \leq r$ and $i + (r+1)j \leq k'-1$. There are many partitions and many choices for $g(X)$. However, we shall focus on partitions given by linear subsets of $\F_q$ or by cosets of the multiplicative group of $\F_q$. We shall use evaluation codes as a generalization of Reed--Solomon codes and AG codes.

Let $A = \{\alpha_1, \alpha_2, \ldots, \alpha_n\} \subseteq \mathbb{F}_q^m$. Let $f(x_1, x_2, \ldots, x_m)$ be a polynomial in $m$ variables. The evaluation map of $f$ on $A$ is defined as $$ev_A : \mathbb{F}_q[x_1, x_2, \ldots, x_m] \rightarrow \mathbb{F}_q^n $$  where $$ev_A(f) = (f(\alpha_1), f(\alpha_2), \ldots, f(\alpha_n)).$$ 
We remark that the vanishing ideal of $A$, namely $$I_A =\{f \in \mathbb{F}_q[x_1, x_2, \ldots, x_m] \ | f(\alpha) = 0 \forall \alpha \in A  \},$$ is the kernel of the evaluation map $ev_A$.

Let $A= \{\alpha_1, \alpha_2, \ldots, \alpha_n\} \subseteq \mathbb{F}_q^m$. Let $L$ be a subspace of $\mathbb{F}_q[x_1, x_2, \ldots, x_m]$. The set $$C(A, L) = \{ ev_A(f) \ | \ f \in L \} $$ is known as an affine variety code. The definition of an affine variety code simply states that a linear code may be constructed  by evaluating functions on a set of points. In most cases, the structure of $L$ or $A$ will imply certain properties of the code hold,  such as dimension, minimum distance or locality.

\begin{lemma}
\label{lemma:ProductLRC}
Let $V_1 \subseteq \F_q^{m_1}$. Let $L_1$ be a subspace of $\mathbb{F}_q[x_1, x_2, \ldots, x_{m_1}]$. Similarly, take $V_2 \subseteq \F_q^{m_2}$ and $L_2$ be a subspace of $\mathbb{F}_q[y_1,y_2, \ldots, y_{m_2}]$. Consider the evaluation codes: $C_1 = C(V_1, L_1)$ and $C_2 = C(V_2, L_2)$.  
The product code $C_1 \times C_2$ is the evaluation code $C_3 = C(V_3, L_3)$, where $V_3 = V_1 \times V_2 \subseteq \F_q^{m_1+m_2}$ and the set of evaluated functions is $L_3 = \{ f(X)g(Y), f \in L_1, g \in L_2 \}$.
\end{lemma}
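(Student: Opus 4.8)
The plan is to exhibit a coordinate-preserving identification between the two codes, using the fact that evaluation on a product of point sets factors as a tensor product of evaluations. First I would fix notation: write $V_1 = \{\alpha_1, \dots, \alpha_{n_1}\}$ and $V_2 = \{\beta_1, \dots, \beta_{n_2}\}$, so that $V_3 = V_1 \times V_2$ consists of the pairs $(\alpha_i, \beta_j)$, and identify the coordinate indexed by $(\alpha_i, \beta_j)$ of $C_3$ with the pair $(i,j) \in [n_1] \times [n_2]$ that indexes $C_1 \times C_2$. Under this identification the two codes live in the same ambient space $\F_q^{[n_1] \times [n_2]}$, so it suffices to check they consist of the same vectors.

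The key computation is the factorization of the evaluation map. For $f \in L_1$ and $g \in L_2$, since $X=(x_1,\dots,x_{m_1})$ and $Y=(y_1,\dots,y_{m_2})$ involve disjoint sets of variables, the function $f(X)g(Y) \in L_3$ evaluates at $(\alpha_i, \beta_j)$ to $f(\alpha_i) g(\beta_j)$. Writing $a := ev_{V_1}(f) \in C_1$ and $b := ev_{V_2}(g) \in C_2$, this says precisely that the $(i,j)$ coordinate of $ev_{V_3}(f(X)g(Y))$ equals $a_i b_j$; that is, $ev_{V_3}(f(X)g(Y))$ is exactly the rank-one array $(a_i b_j)$ appearing in the definition of $C_1 \times C_2$. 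I would record this as the identity $ev_{V_3}(f(X)g(Y)) = ev_{V_1}(f) \otimes ev_{V_2}(g)$ under the natural identification $\F_q^{[n_1] \times [n_2]} \cong \F_q^{n_1} \otimes \F_q^{n_2}$.

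With this identity in hand both inclusions follow. For any rank-one codeword $(a_i b_j)$ of $C_1 \times C_2$, since $C_1 = C(V_1, L_1)$ and $C_2 = C(V_2, L_2)$ there exist $f \in L_1$ and $g \in L_2$ with $ev_{V_1}(f) = a$ and $ev_{V_2}(g) = b$; then $ev_{V_3}(f(X)g(Y))$ realizes $(a_i b_j)$ inside $C_3$. Conversely every generator $ev_{V_3}(f(X)g(Y))$ of $C_3$ is such a rank-one codeword. Taking $\F_q$-linear spans of these generators on both sides --- legitimate because $C_3 = ev_{V_3}(L_3)$ is the image of the subspace $L_3 = \mathrm{span}\{f(X)g(Y)\}$ under the linear map $ev_{V_3}$, while $C_1 \times C_2$ is by definition the span of its rank-one arrays --- yields $C_3 = C_1 \times C_2$.

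The only point requiring care, and the one I would flag explicitly, is that the set-builder expressions for $L_3$ and for $C_1 \times C_2$ must be read as $\F_q$-linear spans rather than as the bare sets of products: $L$ is required to be a subspace in the definition of an affine variety code, and the product code must be $k_1 k_2$-dimensional as in the preceding theorem. Once this convention is fixed, the argument reduces to the routine observation that the map $L_1 \otimes L_2 \xrightarrow{\ f \otimes g \mapsto f(X)g(Y)\ } L_3$ is an isomorphism (because the variable sets are disjoint) and that $ev_{V_3}$ intertwines it with the tensor product of the evaluation maps; no genuine difficulty remains beyond this bookkeeping.
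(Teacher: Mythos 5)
Your proof is correct, and its first half (the factorization $ev_{V_3}(f(X)g(Y)) = ev_{V_1}(f)\otimes ev_{V_2}(g)$ coming from the disjointness of the variable sets) is exactly the paper's opening observation. Where you diverge is in how equality is concluded. The paper runs a dimension count: it picks interpolating functions $f_1,\dots,f_{k_1}\in L_1$ and points $\alpha_1,\dots,\alpha_{k_1}\in V_1$ with $f_i(\alpha_j)=\delta_{i,j}$ (and likewise for $C_2$), so that the $k_1k_2$ evaluations $ev_{V_3}(f_ig_{i'})$ are visibly independent; since $C_3\subseteq C_1\times C_2$ and the latter has dimension $k_1k_2$, equality follows. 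You instead argue by double inclusion, using surjectivity of $ev_{V_1}$ onto $C_1$ and of $ev_{V_2}$ onto $C_2$ to realize every rank-one array, and then take spans. Your route is cleaner and avoids constructing dual bases, but note that it silently shifts the burden onto the claim that the product code (defined row-by-row and column-by-column in the paper's Definition) equals the span of its rank-one arrays; that identification is precisely what the paper's information-set/dimension argument certifies, so the two proofs ultimately rest on the same fact. You flag this point honestly, and invoking the preceding theorem's statement that $\dim(C_1\times C_2)=k_1k_2$ closes it, so there is no gap --- just a different place to park the linear algebra.
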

\begin{proof} It is clear that taking $f \in L_1$ and $g \in L_2$  and evaluating the product $f(X)g(Y)$  on the array $\{(\alpha, \beta) \ | \ \alpha \in V_1, \beta \in V_2  \}$ will give a codeword of the form $(f(\alpha)g(\beta))$. This codeword is also a codeword of $C_1 \times C_2$.  

In order to prove equality, we use a dimensional analysis. As $C_1$ is a code of dimension $k_1$, there exist $f_1, f_2, \ldots, f_{k_1}$ functions of $L_1$ and $\alpha_1, \alpha_2, \ldots, \alpha_{k_1} \in V_1$ such that $f_i(\alpha_j) = \delta_{i,j}$. Likewise, there exist $g_1, g_2, \ldots, g_{k_2}$ functions of $L_2$ and $\beta_1, \beta_2, \ldots, \beta_{k_2} \in V_2$ such that $g_{i'}(\beta_{j'}) = \delta_{i',j'}$. The evaluation of the functions $f_i g_{i'}$ on the points $\alpha_j \beta_{j'}$ will also imply the image has dimension $k_1k_2$.
 \end{proof}

We will construct LRC codes based on the product code of the Tamo--Barg construction and Lemma \ref{lemma:ProductLRC}. In particular, we shall take $V_1 \times  V_2, V_1 \times V_2 \times  V_3$, for $V_i \subseteq \F_q$ as our evaluation sets and the evaluation functions to be $$\{ f_1(X)f_2(Y)f_3(Z) \ | \ f_i \in   L_i\}$$ where $L_i = \{ T^ag_i(T)^j,  0 \leq a \leq r_i, a \neq s_i, a+ (r_i+1)j < k'\}$.

Note that classical Hermitian codes are obtained by evaluating monomials of the form $\mathcal{M}(s):= \{X^iY^j \ | \ iq+j(q+1) \leq s  \}$ on the $q^3$ points of the form $A = \{ (\alpha, \beta) \in \F_{q^2}^2 \ | \ \alpha^{q+1} = \beta^q+\beta, \alpha \neq 0 \}$. However, as the vanishing ideal of $A$ is the ideal spanned by $X^{q+1}-Y^q-Y$ and $X^{q^2-1}-1$, we may consider the Hermitian code obtained by evaluating the functions $$\mathcal{M}(s)= \{X^iY^j \ | \ iq+j(q+1) \leq s,   0 \leq i \leq q^2-2, 0 \leq j \leq q-1\}.$$ 

In order to find a subcode of the Hermitian code with a given locality, proceed as follows: Let $g_1(x)$ be a polynomial of degree $r_1+1$. Let $A_1, A_2, \ldots, A_{\frac{q^2-1}{r_1+1}}$ be a partition of $\F_{q^2}$ into multiplicative cosets of $\F_{q^2}^*$ where $g_1(X)$ is constant on each $A_i$.  Likewise let $g_2(Y)$ be a polynomial of degree $r_2+1$. Let $B_1, B_2, \ldots, B_{\frac{q}{r_2+1}}$ be a partitiion of $\{\gamma | \gamma^q+\gamma = 0  \}$ where $g_2(Y)$ is constant on each $B_j$. For fixed $s_1, s_2$, the code obtained by evaluating 
$$
L_{s1, s2}(s) = \left\{ X^{i_1}g_1(X)^{i_2}Y^{j_1}g_2(Y)^{j_2}  \ | \ 
\begin{array}{l}
0 \leq i_1 \leq r_1, i_1 \neq s_1, 0 \leq j_1 \leq r_2,  \\0 \leq i_1+(r_1+1)i_2 \leq q^2-2, \\0 \leq j_1+(r_2+1)j_2 \leq q-1 \end{array}  \right\}$$ is a subcode of the Hermitian code of degree $s$ which also has locality $r_1$ and locality $r_2$ with availability $2$.
In this case we obtain the codes over $\F_{16}$ with locality $3$ and availability $2$ and the following parameters: $[64, 1, 64]$,
$[64, 2, 60]$, $[64, 3, 59]$, $[64,\  4, 56]$,
$[64, \ 5, 55]$, $[64,\  6, 54]$, 
$[64, 7, 51]$, $[64, 8, 50]$, 
$[64, 9, 48]$, $[64, 10, 46]$, 
$[64, 11, 44]$, $[64, 12, 43]$, 
$[64, 13, 40]$, $[64, 14, 39]$, 
$[64, 15, 38]$, $[64, 16, 35]$, 
$[64, 17, 34]$, $[64, 18, 32]$, 
$[64, 19, 30]$, $[64, 20, 28]$, 
$[64, 21, 27]$, $[64, 22, 24]$, 
$[64, 23, 23]$, $[64, 24, 22]$, 
$[64, 25, 19]$, $[64, 26, 18]$, 
$[64, 27, 16]$, $[64, 28, 14]$, 
$[64, 29, 12]$, $[64, 30, 12]$,
$[64, 31, 8]$, $[64, 32, 8]$, 
$[64, 33, 8]$, $[64, 34, 6]$, 
$[64, 35, 6]$, $[64, 36, 4]$.
Note these codes handily outperform the product code construction from two Reed--Solomon code over $\F_{16}$, which have parameters: $[64, 3, 56]$, $[64, 6, 42]$, $[64, 9, 28]$,$[64, 12, 30]$, $[64, 18, 20]$, $[64, 27, 12]$.

For Suzuki curves, we use two different constructions of LRCs. As the full affine plane $\F_q^2$ is the set of $\F_q$--rational points, we evaluate the intersection of a product code of two LRC codes with the Suzuki code of length $q^2$. The vanishing ideal of the full affine plane is the ideal spanned by $X^q-X, Y^q-Y$. In this case, the Suzuki code is obtained by taking the polynomials of low order at infinity and evaluating at the $q^2$ rational points. One can also take the remainders modulo $X^q+X, Y^q+Y$ to determine the dimension of the code instead. Hence the Suzuki code is obtained by evaluating $L(s) =$ $$ \{ X^aY^bU^cV^d\mod X^q+X, Y^q+Y\ | \ aq + b(q+q_0)+ c(q+2q_0)+ d(q+2q_0+1) \leq s \}.$$ 

\begin{table}  \label{suz_table}
\centering
\begin{tabular} {|c| c| c|}
\hline
$[s_1, s_2]$ & Suzuki code & Suzuki code with locality and availability \\ \hline
$[1, 0]$ & $[64, 1, 64]$ & $[64, 1, 64]$ \\
$[1, 0]$ & $[64, 2, 56]$ & $[64, 2, 56]$ \\
$[2, 0]$ & $[64, 3, 54]$ & $[64, 3, 54]$ \\
$[3, 0]$ & $[64, 5, 51]$ & $[64, 4, 51]$ \\
$[3, 0]$ & $[64, 6, 48]$ & $[64, 5, 48]$ \\
$[3, 0]$ & $[64, 7, 46]$ & $[64, 6, 46]$ \\
$[3, 0]$ & $[64, 8, 44]$ & $[64, 7, 44]$ \\
$[3, 0]$ & $[64, 12, 40]$ & $[64, 8, 40]$ \\
$[3, 0]$ & $[64, 14, 38]$ & $[64, 9, 38]$ \\
$[3, 0]$ & $[64, 15, 36]$ & $[64, 10, 36]$ \\
$[2, 0]$ & $[64, 19, 32]$ & $[64, 11, 32]$ \\
$[3, 0]$ & $[64, 20, 31]$ & $[64, 12, 31]$ \\
$[3, 0]$ & $[64, 21, 30]$ & $[64, 13, 30]$ \\
$[3, 0]$ & $[64, 23, 28]$ & $[64, 14, 28]$ \\
$[3, 0]$ & $[64, 24, 27]$ & $[64, 15, 27]$ \\
$[3, 0]$ & $[64, 25, 26]$ & $[64, 16, 26]$ \\
$[3, 0]$ & $[64, 27, 24]$ & $[64, 17, 24]$ \\
$[3, 0]$ & $[64, 29, 22]$ & $[64, 18, 22]$ \\
$[3, 0]$ & $[64, 31, 20]$ & $[64, 19, 20]$ \\
$[3, 0]$ & $[64, 32, 19]$ & $[64, 20, 19]$ \\
$[3, 0]$ & $[64, 35, 16]$ & $[64, 21, 16]$ \\
$[3, 0]$ & $[64, 37, 14]$ & $[64, 22, 14]$ \\
$[3, 0]$ & $[64, 39, 12]$ & $[64, 23, 12]$ \\
$[3, 0]$ & $[64, 40, 11]$ & $[64, 24, 11]$ \\
$[3, 0]$ & $[64, 43, 8]$ & $[64, 25, 8]$ \\
$[3, 0]$ & $[64, 44, 7]$ & $[64, 26, 7]$ \\
$[3, 0]$ & $[64, 45, 6]$ & $[64, 27, 6]$ \\
$[3, 0]$ & $[64, 47, 4]$ & $[64, 28, 4]$ \\
$[3, 0]$ & $[64, 49, 2]$ & $[64, 29, 2]$ \\
\hline
\end{tabular}
\caption{Comparison of parameters of codes from the Suzuki curve using a standard AG code construction and those with locality and availability}
\end{table}

In this case, the optimal LRC codes of locality $3$ and length $8$ have parameters: $[8, 1, 8]$, $[8, 2, 7]$,$[8, 3, 6]$, $[8, 4, 4]$, $[8, 5, 3]$, $[8, 6, 2]$.
In order to get a product code of locality $3$ and availability $2$ from these codes, we get a $[64,10, 21]$ code. From the Suzuki code construction, after imposing additional LRC conditions, we get a $[64,10,36]$ code with the same locality and availability parameters.

In the following table, we compare some Suzuki LRC codes with some RS product LRC codes. Both have the same length, symbols,  locality and availability. Note that we were able to improve on most of the Product code constructions, except for $[64, 25, 9]$. We expect to improve our codes by improving the minimum distance bounds of the Suzuki codes.

\begin{table}
 \label{suz_table_comp}
\centering
\begin{tabular} {|c| c|}
\hline
Suzuki code with locality and availability  & Comparable Product code \\ \hline
 $[64, 1, 64]$  & $[64, 1, 64]$ \\
 $[64, 2, 56]$  & $[64, 2, 56]$ \\
 $[64, 3, 54]$  & $[64, 3, 48]$ \\
 $[64, 4, 51]$  & $[64, 4, 49]$ \\
 $[64, 5, 48]$ & $[64, 5, 24]$\\
 $[64, 6, 46]$ & $[64, 6, 42]$ \\
 $[64, 7, 44]$ & $[64, 6, 42]$ \\
 $[64, 8, 40]$  & $[64, 8, 28]$ \\
 $[64, 9, 38]$  & $[64, 9, 36]$ \\
 $[64, 10, 36]$  & $[64, 10, 21]$ \\
 $[64, 11, 32]$ & $[64, 10, 21]$ \\
 $[64, 12, 31]$ & $[64, 12, 24]$ \\
 $[64, 13, 30]$ & $[64, 12, 24]$ \\
 $[64, 14, 28]$ & $[64, 12, 24]$ \\
 $[64, 15, 27]$ & $[64, 15, 18]$ \\
 $[64, 16, 26]$  & $[64, 16, 16]$ \\
 $[64, 17, 24]$ & $[64, 16, 16]$ \\
 $[64, 18, 22]$ & $[64, 18, 12]$ \\
 $[64, 19, 20]$ & $[64, 18, 12]$ \\
 $[64, 20, 19]$ & $[64, 20, 12]$ \\
 $[64, 21, 16]$ & $[64, 20, 12]$ \\
 $[64, 22, 14]$ & $[64, 20, 12]$ \\
 $[64, 23, 12]$ & $[64, 20, 12]$ \\
$[64, 24, 11]$  & $[64, 24, 8]$ \\
 $[64, 25, 8]$ & $[64, 25, 9]$ \\
 $[64, 26, 7]$ & $[64, 25, 9]$ \\
$[64, 27, 6]$ & $[64, 25, 9]$ \\
 $[64, 28, 4]$ & $[64, 25, 9]$ \\
 $[64, 29, 2]$ & $[64, 25, 9]$ \\ \hline
\end{tabular}
\caption{Comparison of parameters of codes from the Suzuki curve with locality and availability and product codes}
\end{table}

To get LRCs from the Ree curve we shall make a similar construction to the codes from the Suzuki curve. Due to the abundance of possible codes and availabilities, we shall restrict ourselves to the case where $q = 27$, availability is $3$ and $r_1=r_2 =r_3 = 8$. Please recall that the Ree curve $R_q$ may be described by the equation
$$
R_q: 
\begin{cases}
y^q-y=x^{q_0} \left( x^q-x \right)\\
z^q-z=x^{2q_0} \left( x^q-x \right)\\
\end{cases}
$$
where $q_0=3^s$, $q=3q_0^2$, and $s \in \N$. The valuation of $x$ at infinity is $q$, the valuation of $y$ at infinity is $q+q_0$ and the valuation of $z$ at infinity is $q+2q_0$.  If $G$ represents the pole at infinity of the Ree curve, and $D$ is the divisor corresponding to the $\mathbf{F}_{27}$--affine points of the Ree curve, then the code $C(sG, D)$ is the algebraic geometry code obtained by evaluating all functions having poles only at infinity of order $\leq s$. We shall compare LRC subcodes of $C(sG,D)$ with product codes of Reed--Solomon codes. We shall use a particular Tamo--Barg construction \cite{TB} for $\mathbf{F}_{27}$. In this case, our sets will be places where the trace is constant $A_0 = \{ a \in \mathbf{F}_{27} \ | \ a+a^3+a^9 = 0 \}$, $A_1 = \{ a \in \mathbf{F}_{27} \ | \ a+a^3+a^9 = 1\}$ and $A_2 = \{ a \in \mathbf{F}_{27} \ | \ a+a^3+a^9 = 2 \}$. The polynomial $L(T) = T+T^3+T^9$ is constant on the recovery sets $A_0$, $A_1$, $A_2$ and the evaluation of the polynomials in $\{ L(T)^{i}T^j \ | \ 0 \leq i \leq 2, 0 \leq j \leq 7, 2i+j \leq k \}$ gives a subcode of the Reed--Solomon code $RS_{27}(\mathbf{F}_{27}, k)$ which is an LRC with locality $8$.

The possible Reed--Solomon codes with locality $8$ and length $27$ of this form are: $[27, 1, 27]$,
 $[27, 2, 26]$,
 $[27, 3, 25]$,
 $[27, 4, 24]$,
 $[27, 5, 23]$,
 $[27, 6, 22]$,
 $[27, 7, 21]$,
 $[27, 8, 20]$,
 $[27, 9, 18]$,
 $[27, 10, 17]$,
 $[27, 11, 16]$,
 $[27, 12, 15]$,
 $[27, 13, 14]$,
 $[27, 14, 13]$,
 $[27, 15, 12]$,
 $[27, 16, 11]$,
 $[27, 17, 9]$,
 $[27, 18, 8]$,
 $[27, 19, 7]$,
 $[27, 20, 6]$,
 $[27, 21, 5]$, \\
 $[27, 22, 4]$,
 $[27, 23, 3]$,
 $[27, 24, 2]$.

The key idea of this proof is that both the product code of Reed--Solomon codes and the subcodes of the Ree AG code may be considered as evaluation codes of combinations of monomials in $\mathcal{L} = \{ X^iY^jZ^l \ | \ 0 \leq i,j,l \leq 26\}$. In the $\mathbf{F}_{27}^3$.

The subcode of $C(sG, D)$ is obtained by evaluating the monomials in 
$$\mathcal{L}(s) = \{ X^aY^bZ^c \in \mathcal{L} \ | \ aq+b(q+q_0)+c(q+2q_0) \leq s, 0 \leq a,b,c \leq q-1 \}.$$ 
The subcode of $C(sG, D)$ with locality $8$ and availability $3$ is given by evaluating polynomials of the form
$L(X)^{a1}X^{a_2}L(Y)^{b1}Y^{b2}L(Z)^{c1}Z^{c2}$ where $0 \leq a_1, b_1, c_1 \leq 2, 0 \leq a_2, b_2, c_2 \leq 7$ and subject to the degree constrain that the polynomials should also be in  $\mathcal{L}(s)$.
Note that depending on the parameters of the codes we might find better Reed--Solomon product codes as LRCs or better LRC subcodes from the Ree curve.

For example, comparing codes with minimum distance $600$ we get an LRC with parameters   $[19683,4536,600]$, locality $8$ and availability $3$ from the product code construction of the Reed--Solomon code and a $[19683,2937,600]$ LRC with locality $8$ and availability $3$ from $C(sG, D)$. However, comparing codes with dimension $200$ we get a   $[19683,200,10580]$ LRC with locality $8$ and availability $3$ from the product code construction of the Reed--Solomon code and a $[19683,201, 13086]$ LRC with locality $8$ and availability $3$ from the AG code.

There is also an LRC construction using the codes $C(sG, D)$. In this case note that for the same replication sets $A_0$, $A_1$ and $A_2$, the dual code of an LRC with locality $8$ is generated by evaluating $\{  L(T)^i \ |  \ 0 \leq i \leq 2 \}$. If we extend this to $\mathbf{F}_{27}^3$ we can get a code with locality $8$ and availability $3$ as the dual code of the evaluation of $\{  L(X)^iL(Y)^jL(Z)^k \ |  \ 0 \leq i \leq 2 \}$. In order to find subcodes of $C(sG, D)$ with  locality $8$ and availability $3$ we consider how many of the functions $\{  L(X)^iL(Y)^jL(Z)^k \ |  \ 0 \leq i \leq 2 \}$ also have weight $s$. The dimension of the LRC is found by computing the dimension of $\mathcal{L}(s) + \{  L(X)^iL(Y)^jL(Z)^k \ |  \ 0 \leq i \leq 2 \}$.

In this case we have found a $[19683,4536,600]$ LRC with locality $8$ and availability $3$ from the product code construction of the Reed--Solomon code and a $[19683,15434,600]$ LRC with locality $8$ and availability $3$ from $C(sG, D)^\perp$.

We have found instances in which the Reed--Solomon product codes are better than the LRC from the AG codes. Likewise, we have found cases in which the AG LRCs outperform the Reed--Solomon codes. Further improvements could be possible as knowledge of the Riemann-Roch spaces of the Ree curve improves.

\end{document}